\newtheorem{Thm}{Theorem}
\newtheorem{Lem}[Thm]{Lemma}
\newtheorem{Cor}[Thm]{Corollary}
\theoremstyle{remark}
\theoremstyle{definition}
\theoremstyle{definition}
\begin{document}


\title[]{An Asymptotically Fast Polynomial Space Algorithm for Hamiltonicity Detection in Sparse Directed Graphs}
\author{Andreas Bj\"orklund}
\thanks{This work was carried out while working as a researcher for Ericsson Research}

\begin{abstract}
We present a polynomial space Monte Carlo algorithm that given a directed graph on $n$ vertices and average outdegree $\delta$, detects if the graph has a Hamiltonian
cycle in $2^{n-\Omega(\frac{n}{\delta})}$ time. This asymptotic scaling of the savings in the running time matches the fastest known exponential space algorithm by Bj\"orklund and Williams ICALP 2019.
By comparison, the previously best polynomial space algorithm by Kowalik and Majewski IPEC 2020 guarantees a $2^{n-\Omega(\frac{n}{2^\delta})}$ time bound.

Our algorithm combines for the first time the idea of obtaining a fingerprint of the presence of a Hamiltonian cycle through an inclusion--exclusion summation over the Laplacian of the graph from Bj\"orklund, Kaski, and Koutis ICALP 2017, with the idea of sieving for the non-zero terms in an inclusion--exclusion summation by listing solutions to systems of linear equations over $\mathbb{Z}_2$ from Bj\"orklund and Husfeldt FOCS 2013.
\end{abstract}

\maketitle
\thispagestyle{empty} 


\clearpage

\section{Introduction}
Given a directed graph $G=(V,A)$ on $n=|V|$ vertices, we consider the problem of detecting if $G$ has a Hamiltonian cycle,
a directed cycle through $G$ using a subset of the arcs $A$, visiting each vertex of $V$ exactly once. We call this the \emph{Hamiltonicity} problem. Deciding Hamiltonicity in a directed graph is one of Karp's original NP-complete problems~\cite{Karp1972}. For a very long time, the best worst case algorithm known for this problem was based on Bellman's~\cite{Bellman1962} and Held and Karp's~\cite{HeldK1962} dynamic programming across all vertex subsets from the early 1960's running in $2^n\operatorname{poly}(n)$ time. If and when one can improve over $O^*(2^n)$ time has been the focus of much of recent research, confer the related work section below.
A recent result by Bj\"orklund and Williams~\cite{BjorklundW2019}, building on Bj\"orklund, Kaski, and Koutis~\cite{BjorklundKK2017}, describes a deterministic, exponential space, $2^{n-\Omega(\frac{n}{\delta})}$ time algorithm that counts the number of Hamiltonian cycles where $\delta=\frac{|A|}{n}$ is the average outdegree. A natural follow-up question is whether this speedup intrinsically comes at the cost of exponential space usage. In other words, is the obtained speedup necessarily an effect of efficient tabulation of solutions to recurrent subproblems?
In this paper, we give a partial negative answer to that question. We show that this requirement of an exponentially sized space resource can be reduced to a polynomially sized one, when we are only interesting in detecting if the graph has a Hamiltonian cycle, and are content with a randomised algorithm. 
We prove
\begin{Thm}
\label{thm:main}
There is a polynomial space Monte Carlo algorithm that given an $n$-vertex directed graph of average outdegree $\delta$, detects w.h.p. if the graph has a Hamiltonian cycle in $2^{n-\Omega(\frac{n}{\delta})}$ time, without any false positives.
\end{Thm}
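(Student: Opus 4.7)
My plan is to combine two existing inclusion--exclusion ingredients. First, from Bj\"orklund--Kaski--Koutis (ICALP 2017) I would borrow a Laplacian-based polynomial-space fingerprint
\[
F(G) \;=\; \sum_{S \subseteq V} (-1)^{|S|}\, f_G(S),
\]
where $f_G(S)$ is (the determinant of) a random generalised Laplacian of $G[V \setminus S]$ evaluated in a polynomially large extension of $\mathbb{F}_2$. By the BKK analysis, $F(G)$ vanishes identically when $G$ is non-Hamiltonian and is nonzero with probability $1 - 2^{-\Omega(n)}$ otherwise; each $f_G(S)$ is computable in polynomial time and polynomial space, giving an $O^*(2^n)$ polynomial-space algorithm out of the box.

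Second, I would apply the Bj\"orklund--Husfeldt (FOCS 2013) linear sieve to avoid enumerating all of $2^V$. Expanding the Laplacian determinant in its cycle-cover expansion shows the summand factorises over vertices as $f_G(S) = \prod_v g_v\bigl(S \cap N^+[v]\bigr)$, with each $g_v$ depending on only $\deg^+(v)+1$ bits of the indicator of $S$; this is the locality that BH require. Drawing a uniformly random $\mathbb{F}_2$-linear map $L \colon \mathbb{F}_2^n \to \mathbb{F}_2^k$ with $k = \Theta(n/\delta)$ and a target $b \in \mathbb{F}_2^k$, I restrict the fingerprint to the affine subspace $\mathcal{A} = L^{-1}(b)$. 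The solution set $\mathcal{A}$ has size $2^{n-k}$ and can be enumerated in polynomial space by Gaussian elimination followed by a Gray-code-style traversal, giving total time $2^{n-k}\operatorname{poly}(n) = 2^{n-\Omega(n/\delta)}\operatorname{poly}(n)$. Repeating $O(n)$ times with independent randomness amplifies success to w.h.p., and one has no false positives because $F(G) \equiv 0$ identically in the non-Hamiltonian case.

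The main obstacle, and the conceptual novelty signalled in the abstract, is verifying that the BH sieve composes with the BKK fingerprint while preserving non-vanishing of the restricted sum with at least inverse-polynomial probability at the codimension $k = \Theta(n/\delta)$ dictated by the target runtime. Previous BH sieves were set up for inclusion--exclusion over arcs rather than vertices, and the Laplacian summand is algebraically more delicate than pure closed-walk counting. The heart of the proof will be an isolation-style lemma taken over the \emph{joint} randomness of the BKK extension-field weights and the BH linear map: for sparse $G$, the vertex-local product structure of $f_G$ and the controlled overlap pattern of the neighbourhoods $N^+[v]$ will be used to show that a random affine restriction of codimension $\Theta(n/\delta)$ cannot simultaneously cancel all contributions responsible for $F(G) \neq 0$. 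Executing this compatibility analysis, and pinning down the precise constant hidden in $\Omega(n/\delta)$, is where I expect the bulk of the technical work to sit.
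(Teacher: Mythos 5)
There is a genuine gap here, and it sits exactly where you deferred the work. Two of your concrete claims do not hold as stated. First, the Laplacian summand does \emph{not} factorise over vertices: $\det(L_s(y,z))$ is a global function of the matrix, not a product $\prod_v g_v(S\cap N^+[v])$ of local terms, so the locality that the Bj\"orklund--Husfeldt sieve needs is not available in the form you assume. In the paper the only locality used is much weaker: for rows indexed by a random vertex set $T$ of size $\Theta(n/\delta)$ (on whose incoming arcs the variables are fixed to $1$), the \emph{parity of the diagonal entry} of row $v$ is a linear function mod $2$ of the $y_w$ over in-neighbours $w$ of $v$, and off-diagonal entries vanish when $y_v=0$. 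Second, your sieve -- summing only over a random affine subspace $L^{-1}(b)$ of codimension $k$ -- does not compute the fingerprint; it computes a partial sum of it, and a partial sum of an identically zero sum need not be zero, so your claimed ``no false positives'' property breaks, and conversely you have no argument that the restriction preserves non-vanishing (you explicitly leave that ``compatibility analysis'' open, but it is the heart of the theorem, not a finishing touch).

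The paper's mechanism is different in a way that avoids both problems: it works over $R=\mathbb{Z}_{2^k}[x]/(x^m)$ with $k=\Theta(n/\delta)$, not an extension of $\mathbb{F}_2$, and perturbs the diagonal entries at $T$ by random bits $q_v$. By the Leibniz expansion, any term of the inclusion--exclusion sum whose matrix has at least $k$ all-even rows among those indexed by $T$ is divisible by $2^k$ and hence exactly zero in $R$; so the algorithm sums over a \emph{provable superset of all nonzero terms}, namely the solutions of the $\le 3^{\tau}$ linear systems over $\mathbb{Z}_2$ obtained by guessing $y$ on $T$ and the diagonal parities, enumerated in polynomial space by Gaussian elimination plus null-space combinations. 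Because the omitted terms are identically zero, the partial sum equals $P_G(z)$ exactly, which is what gives both no false positives and the runtime bound (the expected number of surviving terms is $2^{n-\Omega(n/\delta)}$ by a binomial tail estimate over the random $q$). The probabilistic correctness is then a separate argument you would still need: sparsity plus Markov shows that with constant probability the number $e_{H^T}$ of Hamiltonian paths sharing a fixed $T$-truncation is below $2^{k}$, so the corresponding coefficient survives mod $2^{k}$, and the Isolation Lemma applied to random exponents $x^{w(uv)}$ on arcs outside $T$ isolates a unique monomial of minimal degree. Your choice of characteristic-$2$ field removes the graded ``divisible by $2^k$'' vanishing that the whole sieve rests on, so even repairing the locality claim would not rescue the random-restriction route without essentially reconstructing this $\mathbb{Z}_{2^k}$ machinery.
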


Our algorithm builds on the algorithms by Bj\"orklund, Kaski, and Koutis ~\cite{BjorklundKK2017} and its successor by  Bj\"orklund and Williams~\cite{BjorklundW2019}.
At the core of the algorithms in~\cite{BjorklundKK2017} and~\cite{BjorklundW2019}, are efficient methods to list contributing terms to a sum evaluating to the number of Hamiltonian cycles. They use split, tabulate, and list procedures that seem to require exponential space. In more detail, in~\cite{BjorklundW2019}, they reduce the problem to listing pairs of dissimilar vectors from two exponential size sets of short vectors, where dissimilar means different in each coordinate. They further present two efficient algorithms to solve this latter problem that build on tabulation, one explicitly on one of the two sets, and the other indirectly as subresults of a large fast matrix multiplication. 
Our overall algorithm use a similar idea of listing contributing terms to a sum, but we use a different approach of obtaining the terms. In particular, we do not directly reduce to the problem of listing dissimilar vectors.  Our main insight is that another technique previously used to compute the parity of the number of Hamiltonian cycles by Bj\"orklund and Husfeldt~\cite{BjorklundH2013}, by a careful design, can replace the tabulation for enumeration of solutions to a linear equation system over $\mathbb{Z}_2$. This latter task is well-known to be possible to do in polynomial space. Our way of combining the above two techniques is our main technical novelty.

Polynomial space algorithms improving over $O^*(2^n)$ time in sparse graphs were known before.
We note that for the easier case of everywhere sparse graphs, i.e., graphs in which the sum of the in- and outdegree at every vertex is bounded by $d$,
Bj\"orklund \emph{et al.}~\cite{BjorklundHKK2012} implicitly showed that you can decide Hamiltonicity in $2^{n-\Omega(\frac{dn}{2^d})}$ time using polynomial space. Their paper considered TSP in undirected graphs, but it is not difficult to see that their proof of Theorem 1.3 could also be used for directed Hamiltonicity. 
Very recently, Kowalik and Majewski~\cite{KowalikM2020} presented a polynomial space, $2^{n-\Omega(\frac{n}{2^\delta})}$ time algorithm for directed Hamiltonicity on $n$-vertex graphs of average outdegree $\delta$. It builds on the
algorithm by Bj\"orklund~\cite{Bjorklund2018} which is a $(2-2^{1-\delta})^{n/2}\operatorname{poly}(n)$ time polynomial space algorithm in undirected bipartite graphs of average degree $\delta$ using techniques similar in spirit to our algorithm design here. Note though that in our design, the speedup is exponential in $n/\delta$, whereas the speedup in~Kowalik and Majewski~\cite{KowalikM2020} is exponential in $n/2^\delta$.
 
 It should be noted in passing, that for many hard combinatorial problems the best known worst case algorithms use exponential space. In fact, in some cases the only known algorithms that improve over a straight-forward brute-force algorithm testing all possibilities use exponential space, are deterministic, and are also able to count the solutions. To give just one example that is also on Karp's list~\cite{Karp1972}, this holds presently for \textsc{MaxCut}: compute a bipartition of the vertices that maximises the number of edges between the two parts. It has a $O^*(1.73^n)$ time counting algorithm where $n$ is the number of vertices~\cite{Williams2004}, but no polynomial space algorithm improving substantially over the brute-force $O^*(2^n)$ time algorithm is known, even if we only consider detection and use randomisation. 

One reason to get rid of exponential space usage from a practical point of view, is to offer better implementability on parallel computing devices. This seems to be the case also with the present algorithm compared to the exponential space algorithms in~\cite{BjorklundW2019}. In particular, the computationally heavy steps in our algorithm can easily be scheduled to compute different parts of the sum that may run obliviously of each other on different processors, only adding up their final partial sums in the end.
 
\subsection{Related work}
The fastest known (exponential space) algorithm for directed Hamiltonicity as far as we know is the $2^{n-\Omega(\sqrt{n/\log\log n})}$ time algorithm by Bj\"orklund, Kaski, and Williams~\cite{BjorklundKW2019}. 
The fastest known polynomial space algorithm is the $2^n\operatorname{poly}(n)$ time one based on counting closed walks via adjacency matrix powering and inclusion--exclusion,
discovered at least four times~\cite{KohnGK1977,Karp1982,Bax1993,Barvinok1996}, the oldest by Kohn, Gottlieb, and Kohn~\cite{KohnGK1977} dates back to 1977.
Much faster algorithms exist for special cases, also apart from the $2^{n-\Omega(\frac{n}{\delta})}$ time exponential space algorithm in average outdegree $\delta$ directed graphs by Bj\"orklund and Williams~\cite{BjorklundW2019}.
In \emph{bipartite} directed graphs, there is a $O^*(1.732^n)$ time, polynomial space, algorithm by Bj\"orklund, Kaski, and Koutis~\cite{BjorklundKK2017}.
There is also an earlier $O^*(1.888^n)$ time, exponential space, algorithm by Cygan, Kratsch, and Nederlof~\cite{CyganKN2018} based on a different technique.
 Bj\"orklund and Husfeldt~\cite{BjorklundH2013} show a $O^*(1.619^n)$ time, polynomial space, algorithm that computes the \emph{parity} of the number of Hamiltonian cycles in a directed graph. Somewhat perplexingly, this algorithm does not seem to be useful for the detection problem in general. However, counting modulo powers of small primes can be used for detection when the number of Hamiltonian cycles are less than $c^n$ for some constant $c$: Bj\"orklund, Kaski, and Koutis~\cite{BjorklundKK2017}, improving over a partial result in Bj\"orklund, Dell, and Husfeldt~\cite{BjorklundDH2015}, show how to find a Hamiltonian cycle in $O((2-\epsilon_c)^n)$ time, with $\epsilon_c>0$ being another constant depending only on $c$.
In undirected graphs, there is a $O^*(1.657^n)$ time, polynomial space, algorithm, and in bipartite undirected graphs, there is a $O^*(1.415^n)$ time algorithm, both by Bj\"orklund~\cite{Bjorklund2014}. 
Despite the partial positive results above, it is a major open question in the area of exact exponential time algorithms, whether or not a $O(c^n)$ time algorithm for any $c<2$ exists for detecting Hamiltonian cycles in general directed graphs.

\subsection{Methodology}
Our algorithm is based on algebraic fingerprinting for Hamiltonian cycles, following a long line of works~\cite{Bjorklund2014,BjorklundH2013,BjorklundDH2015,BjorklundKK2017,Bjorklund2018,BjorklundW2019}. The idea is to define a multivariate polynomial $P$ over a ring, along with  an efficient algorithm for its evaluation, with the property that $P$  is non-zero only if the graph has a Hamiltonian cycle. That polynomial can then be used to detect Hamiltonicity, by testing if $P$ is identically zero by evaluating $P$ at a random point using the efficient algorithm (Polynomial identity testing, PIT). The choice of ring is important and a somewhat delicate matter.
The basic observation is that using a larger ring increases the chance of making $P$ non-zero on many points, whereas a smaller ring typically makes it easier to come up with an efficient evaluation algorithm. We will use a large ring for the polynomial, but our sample space will only take values from a small subring on a large subset of the variables.
Our algorithm for evaluating $P$ follows a construction by~\cite{BjorklundKK2017} based on an exponential sum of weighted Laplacians of the graph, that in itself already describes a $2^n\operatorname{poly}(n)$ time algorithm. To get a running time below that, we take measures in designing our sample space so that many summands will be zero for a trivial reason, and we can find out which are not by solving a linear equation system over $\mathbb{Z}_2$. This is inspired by the algorithm in~\cite{BjorklundH2013} that lists solutions to a quadratic equation system over $\mathbb{Z}_2$ to sieve for the contributing terms. We list the solutions to a linear equation system by generating one solution from a Gaussian elimination followed by taking linear combinations of that solution with the null space (also found by the Gaussian elimination). This way we can list a superset of the summands that are non-zero and compute the sum to obtain the value of $P$ at our random point.

\section{The Algorithm}
\subsection{The Hamiltonicity Polynomial}
We begin by describing the polynomial $P$ we will be using.
Following~\cite{BjorklundW2019}, we will work on a slightly modified version $G=(V,A)$ of the $n$-vertex input graph $G_{\mbox{in}}=(V_{\mbox{in}},A_{\mbox{in}})$. We pick an arbitrary vertex $u\in V_{\mbox{in}}$, and replace $u$ with two new vertices $s$ and $t$, where $s$ retains all outgoing arcs from $u$, and $t$ retains all incoming arcs to $u$. Note that the Hamiltonian paths from $s$ to $t$ in this modified $G$ are in one-to-one correspondence with the Hamiltonian cycles in the original graph $G_{\mbox{in}}$, and that the average degree is not increased. In the following, we consider the problem of detecting a $s$-$t$ Hamiltonian path on the modified $n+1$ vertex graph $G$.

Fix a (commutative) ring $R$, and introduce a variable $z_{uv}\in R$ for each arc $uv\in A$. Let $\mathcal{H}(G)$ be the set of Hamiltonian $s$-$t$ paths in $G$, and consider the \emph{Hamiltonicity polynomial} $P_G$ as
\begin{equation}
P_G(z)=\sum_{H \in \mathcal{H}(G)} \prod_{uv\in H} z_{uv}.
\end{equation}

Our algorithm is based on an efficient way of evaluating $P_G(z)$ in a carefully chosen random point $z$ over a particular ring.
Note that we will write $z_{uv}$ in formulas to refer both to the formal variable and its value in $R$ according to a specific assignment $z:A\rightarrow R$.
In our analysis we will sometimes think of $P_G(z)$ as a formal polynomial in $z$ with coefficients from $R$, but in the algorithm itself, we always mean $P_G(z)$ to be an evaluation over $R$ of the polynomial $P_G$ in a specific point $z$.

For now, note that an evaluation of $P_G(z)$ in a random point $z$ can potentially be used as a fingerprint of existence of a Hamiltonian cycle in the original input graph: On one hand, the polynomial always evaluates to zero if the input graph has no Hamiltonian cycles (thus there are no false positives). On the other hand, by evaluating it in a random point, we will obtain a non-zero result if there is a Hamiltonian cycle in the input graph $G_{\mbox{in}}$, unless we are unlucky and the monomials happen to cancel each other. As mentioned above, there are two conflicting aspects to consider for a successful fingerprint design:
\begin{enumerate}
\item We want the ring and the sample space to be large enough so we can argue that the result is non-zero w.h.p. if the graph $G_{\mbox{in}}$ is Hamiltonian.
\item We want the ring and the sample space to have some structure that we can use to derive an efficient evaluation algorithm.
\end{enumerate}
The rest of our paper describes one way of balancing these aspects without having to resort to exponential size tabulation to enable a fast evaluation algorithm, by combining the graph Laplacian machinery in Bj\"orklund, Kaski, and Koutis~\cite{BjorklundKK2017} with the linear equation system modulo two listing idea from Bj\"orklund and Husfeldt~\cite{BjorklundH2013}. In Section~\ref{sec: samp}
 we will address the first aspect of balancing the fingerprint which is the major novel part. In Sections~\ref{sec: lap} and ~\ref{sec: perturb}
 we describe the basis of the algorithm in~\cite{BjorklundKK2017} (and subsequently in~\cite{BjorklundW2019}) that we will use, and
 in ~\ref{sec: speedup} and~\ref{sec: list}  we will address the second aspect of balancing the fingerprint by describing a linear algebra enumeration algorithm inspired by the algorithm in~\cite{BjorklundH2013}. Finally, in Section~\ref{sec: sum} we put the parts together into an algorithm for Theorem~\ref{thm:main}.
 We begin by defining the ring.
  
\subsection{The Choice of Ring}
\label{sec: ring}
We will work over the polynomial ring $R=\mathbb{Z}_{2^k}[x]/(x^m)$, i.e., polynomials in one variable truncated at degree $m$ with integer coefficients counted modulo $2^k$.
With foresight, both $k=k_R$ and $m=m_R$ will be $\operatorname{poly}(n)$, and hence an element in $R$ is described by $\operatorname{poly}(n)$ bits, and the arithmetic operations of addition and multiplication can both be done in $\operatorname{poly}(n)$ time. To compute a determinant of a matrix in $R^{n\times n}$, as we will need later, we may use Kaltofen's division-free algorithm~\cite{Kaltofen1992}, that uses $O(n^{3.5}\log n\log\log n)$ ring operations.
Altogether, the computation of the determinant of  an $n\times n$ matrix over the ring $R$ is a $\operatorname{poly}(n)$ time task.

\subsection{The Sample Space}
\label{sec: samp}
In this section, we describe the sample space over which we choose our point $z$ for polynomial identity testing. We will also argue that a randomly chosen point from the sample space has $P_G(z)\neq 0$ with non-zero constant probability when the graph $G$ has a Hamiltonian $s$-$t$ path. 
As we primarily are interested in the asymptotic form of the running time scaling, we will set the parameters somewhat arbitrarily for ease of calculations.
The sample space is parameterised by two positive integers $\tau$ and $\ell$ to be defined later in our analysis. The process to choose the point $z$ is given below:

\vspace{5mm}

\noindent \textbf{SamplePoint}(Returns $T$ and an assignment $z$ to be used for PIT of $P_G$)
\begin{enumerate}
\item Sample a subset $T\subseteq V\setminus \{s\}$ of size $\tau$ uniformly at random.
\item For every arc $uv,v\in T$, set $z_{uv}=1$.
\item For every arc $uv,v\not \in T$, set $z_{uv}=x^{w(uv)}$, where $w(uv)\in \{1,\cdots, \ell\}$ is a uniformly and independently randomly chosen integer.
\end{enumerate}

We next turn to proving that the choice of $z$ is good for PIT of $P_G$.
We will first look at the Hamiltonicity polynomial after assigning $z_{uv}=1$ for all $v\in T$, but for now still treat all other $z$-variables unassigned (left as formal variables).
We call these remaining variables $\tilde{z}$ with $\tilde{z}_{uv}=z_{uv}$ and consider the associated $T$-truncated polynomial $P_{G,T}(\tilde{z})$ obtained from $P_G$ after the variable substitution.
Define $\mathcal{H}_{T}(G)$ as the arc subsets of Hamiltonian $s$-$t$ paths in $\mathcal{H}(G)$ after the removal of any arc ending in $T$, i.e.,
\[
\mathcal{H}_{T}(G)=\{\cup_{uv\in H,v\not \in T} uv:H\in \mathcal{H}(G)\}.
\]
We call these the $T$-truncated Hamiltonian paths. We can write
\[
P_{G,T}(\tilde{z})=\sum_{H' \in \mathcal{H}_{T}(G)} e_{H'}\cdot\prod_{uv\in H'} \tilde{z}_{uv},
\]
where $e_{H'}$ counts the number of Hamiltonian cycles in $G$ with $T$-truncation $H'$.

We first need to prove that $P_{G,T}(\tilde{z})$ is not the zero-polynomial with high enough probability, when $G_{\mbox{in}}$ has a Hamiltonian cycle. Note that it may equal the zero-polynomial even in the presence of Hamiltonian cycles in $G_{\mbox{in}}$, when $e_{H'}$
is a multiple of $2^k$ for all $H'\in \mathcal{H}_{T}(G)$, as this would result in an annihilation in our ring $R$, where $k=k_R$ is the ring parameter in Section~\ref{sec: ring}. We will first argue that this doesn't happen with too large a probability.

To prove this will not happen with some non-zero constant probability, let $H$ be \emph{any} fixed Hamiltonian $s$-$t$ path in $G$. In particular, our arbitrary choice of $H$ is independent of $T$. Let $H^T\in \mathcal{H}_{T}(G)$ be the $T$-truncation of $H$.
We will upper bound the expectation of $e_{H^T}$, the number of Hamiltonian $s$-$t$ paths in $G$ whose $T$-truncation matches $H^T$. For every $T$, define $S=S(T,H)\subseteq V$ to be the set of in-neighbors of $T$ along $H$, i.e., 
\begin{equation}
S=\{u:v\in T,uv\in H\}.
\end{equation}
Note that $S$ and $T$ are not necessarily disjoint, but of the same size $\tau$.
We consider the following bipartite graph $B_{H}$ obtained from an induced subgraph of $G$ as follows. $B_{H}$ has two parts, one representing the vertices in $S$, and one representing the vertices in $T$. All arcs in $B_{H}$ connects a vertex in the first part representing $S$ to a vertex in the second part representing $T$.
There is an arc from a vertex $u$ in the first part to a vertex $v$ in the second part, iff $uv$ is an arc in the induced subgraph $G[S\cup T]$. 
The following lemma tells us that the graph $B_{H}$ in expectation is not too dense.

\begin{Lem}
The expected number of arcs in $B_{H}$ is no more than
\[
{|S|}\left(1+|T|\frac{\delta}{n}\right).
\]
\end{Lem}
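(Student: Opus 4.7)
The plan is to apply linearity of expectation, writing the expected number of arcs in $B_H$ as $\sum_{uv \in A} \Pr[u \in S \text{ and } v \in T]$, and splitting this sum according to whether $uv$ lies on the fixed Hamiltonian $s$-$t$ path $H$ or not. The split is natural because the event $\{u \in S\}$ is the event that the $H$-successor of $u$ lies in $T$, which coincides with the event $\{v \in T\}$ precisely for $H$-arcs and is genuinely separate otherwise.

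For an arc $uv \in H$, $u$ is by definition the unique $H$-predecessor of $v$, so the event $u \in S$ is equivalent to $v \in T$ and the joint probability collapses to $\Pr[v \in T]$. Since every target of an $H$-arc is distinct from $s$, we have $v \in V \setminus \{s\}$, and thus $\Pr[v \in T] = \tau/n$. Summing over the $n$ arcs of $H$ contributes exactly $\tau = |S|$ to the expectation.

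For an arc $uv \in A \setminus H$, let $u'$ denote the $H$-successor of $u$; this is well-defined since $t$ has no outgoing arcs in the modified graph, so $u \neq t$. Then $u \in S$ iff $u' \in T$, and I would verify three structural facts: $u' \neq v$ (otherwise $uv = uu'$ would lie on $H$), $u' \neq s$ (otherwise $s$ would have an $H$-predecessor along $H$), and $v \neq s$ (since $s$ has no incoming arcs in $A$). Hence $u'$ and $v$ are two distinct vertices of $V \setminus \{s\}$, and the probability that both lie in the uniformly random size-$\tau$ subset $T$ of this $n$-element set equals $\tau(\tau-1)/(n(n-1)) \leq (\tau/n)^2$. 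Summing over at most $|A| = n\delta$ such arcs contributes at most $\delta\tau^2/n = |S|\cdot|T|\cdot \delta/n$. Adding the two contributions yields $|S|(1 + |T|\delta/n)$, as required. The argument is essentially routine counting; the only delicate step is establishing the distinctness and non-$s$ conditions on $u'$ and $v$, which is what permits reducing the joint event to a clean two-element hit probability on a uniform subset of $V \setminus \{s\}$.
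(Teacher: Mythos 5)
Your proof is correct and follows essentially the same route as the paper: linearity of expectation with the arcs split into the arc to the $H$-successor (which is present in $B_H$ with probability exactly $\tau/n$, contributing $|S|$ in total) and all remaining arcs (each contributing about $(\tau/n)^2$, totalling at most $|S||T|\delta/n$). Your arc-by-arc accounting, with the explicit checks that $u'$ and $v$ are distinct vertices of $V\setminus\{s\}$, is in fact slightly more careful than the paper's vertex-by-vertex conditioning, but the underlying argument is the same.
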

\begin{proof}
Let $d_v$ denote the outdegree of vertex $v\in V$. Consider a vertex $u\in S$. The arc from $u$ to the next vertex on $H$ is always present in $B_{H}$. The number of other arcs though, is in expectation $(d_u-1)\frac{|T|-1}{n}$ since the other vertices on $T$ apart from $u$'s out-neighbor on $H$ are uniformly distributed.
Hence, by the linearity of expectation, using that each vertex in $V\setminus\{t\}$ is included in $S$ with probability $|S|/n$, the expected number of arcs in $B_{H}$ is
\[
\sum_{u\in V\setminus\{t\}} \frac{|S|}{n}\left(1+(d_u-1)\frac{|T|-1}{n}\right)\leq|S|\left(1+|T|\frac{\delta}{n}\right).
\]
\end{proof}

This means, that if we choose the fixed size $\tau=|T|=|S|=\frac{n}{c\delta}$, we get expected average outdegree from the vertices in the part representing $S$ in $B_{H}$ bounded by $1+c^{-1}$. 
By Markov's inequality for a non-negative random variable $X$, 
\[
\Pr[X\geq \lambda\mathbb{E}[X]]\leq \frac{1}{\lambda},
\]
we can bound the probability that the average degree is not much larger:
\begin{Cor}
\label{cor: sparse}
The probability that the average outdegree of a vertex in $S$ in $B_{H}$ is at most 
\[
\left(1+\frac{1}{49}\right)\left(1+\frac{1}{c}\right),
\]
is at least $1/50$. 
\end{Cor}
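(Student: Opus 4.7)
The plan is to deduce the corollary as a direct Markov's inequality consequence of the preceding lemma. Let $X$ denote the number of arcs in $B_H$, a non-negative random variable with respect to the random choice of $T$. The average outdegree of a vertex in the $S$-side of $B_H$ equals $X/|S|$, so it suffices to bound $\Pr[X/|S| \leq (1+\tfrac{1}{49})(1+\tfrac{1}{c})]$ from below.

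First, I would plug the chosen sizes $|S|=|T|=\tau=\frac{n}{c\delta}$ into the bound from the previous lemma to get
\[
\mathbb{E}[X] \leq |S|\left(1+|T|\frac{\delta}{n}\right) = |S|\left(1+\frac{1}{c}\right),
\]
so $\mathbb{E}[X/|S|] \leq 1+\tfrac{1}{c}$. Next, I would apply Markov's inequality to the non-negative variable $X/|S|$ with the multiplicative slack $\lambda = 1+\tfrac{1}{49} = \tfrac{50}{49}$, yielding
\[
\Pr\!\left[\frac{X}{|S|} \geq \left(1+\tfrac{1}{49}\right)\mathbb{E}\!\left[\tfrac{X}{|S|}\right]\right] \leq \frac{49}{50}.
\]
Finally, since $(1+\tfrac{1}{49})\mathbb{E}[X/|S|] \leq (1+\tfrac{1}{49})(1+\tfrac{1}{c})$, the event on the left-hand side contains the event $\{X/|S| > (1+\tfrac{1}{49})(1+\tfrac{1}{c})\}$, and taking complements gives
\[
\Pr\!\left[\frac{X}{|S|} \leq \left(1+\tfrac{1}{49}\right)\!\left(1+\tfrac{1}{c}\right)\right] \geq \frac{1}{50},
\]
which is exactly the stated corollary.

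There is no real obstacle here; the proof is a textbook Markov application on top of the lemma. The only point requiring a bit of care is that Markov gives a one-sided tail bound on $\lambda\,\mathbb{E}[X/|S|]$, which is an upper bound on the expectation rather than a fixed quantity, so one must note that replacing $\mathbb{E}[X/|S|]$ by the possibly larger $1+\tfrac{1}{c}$ only weakens the event whose complement we are bounding, preserving the inequality in the desired direction.
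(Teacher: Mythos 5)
Your proof is correct and follows exactly the route the paper intends: plug $|S|=|T|=\tau=\frac{n}{c\delta}$ into the lemma to get $\mathbb{E}[X/|S|]\leq 1+\frac{1}{c}$, then apply the stated multiplicative Markov bound with $\lambda=\frac{50}{49}$ and take complements. The care you note about replacing $\mathbb{E}[X/|S|]$ by the larger bound $1+\frac{1}{c}$ only enlarging the complementary event is the right (and only) subtlety, so nothing is missing.
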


We next observe that all Hamiltonian $s$-$t$ paths whose $T$-truncation is $H^T$ defines the same set $S$. This also means that every Hamiltonian $s$-$t$ path in $G$ whose $T$-truncation is $H^T$ must use some arc for each vertex in $S$ in $B_{H}$. 
The product of the outdegrees of vertices in $S$ is an upper bound on their number $e_{H^T}$.
 Hence, by the above corollary with probability at least $1/50$ there will be at most $((1+1/49)(1+c^{-1}))^\tau$ of them by the arithmetic mean-geometric mean inequality. By setting $k_R$ in Section~\ref{sec: ring} large enough so that
\[
2^{k_R}>\left(\left(1+\frac{1}{49}\right)\left(1+\frac{1}{c}\right)\right)^\tau,
\]
we will get a monomial with non-zero coefficient in $P_{G,T}(\tilde{z})$ with probability at least $1/50$. We note that it suffices to set
\[
k_R>\tau\log_2\left(\left(1+\frac{1}{49}\right)\left(1+\frac{1}{c}\right)\right)=\frac{n}{c\delta}\log_2\left(\left(1+\frac{1}{49}\right)\left(1+\frac{1}{c}\right)\right).
\]
We will need $k_R$ to be much smaller than $\tau$ in the evaluation algorithm described in the next sections in order to evaluate $P_G(z)$ fast. 
Setting $c=20$, say, we can thus use $k_R=\frac{\tau}{10}$ and conclude that $e_{H^T}<2^{k_R}$ with large enough probability, and hence that $P_{G,T}(\tilde{z})$ has at least one monomial. To summerise, we have that
\begin{Lem}
\label{lem: bound1}
With $\tau=\frac{n}{20\delta}$ and $k_R=\frac{n}{200\delta}$ (i.e., the parameters set as above), \emph{\textbf{SamplePoint}} returns $T$ so that
the formal polynomial
\[
P_{G,T}(\tilde{z})\neq 0,
\]
when the input graph has a Hamiltonian cycle, with probability at least $1/50$.
\end{Lem}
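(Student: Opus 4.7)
The plan is to leverage the apparatus already built (Corollary \ref{cor: sparse} and the arc-count lemma for $B_H$) to show that for a suitably chosen reference Hamiltonian path $H$, the coefficient of its $T$-truncation in $P_{G,T}(\tilde{z})$ survives reduction modulo $2^{k_R}$ with constant probability. Since the paper has assumed the input graph is Hamiltonian, I fix an arbitrary Hamiltonian $s$-$t$ path $H \in \mathcal{H}(G)$ \emph{before} drawing $T$, so that the randomness of \textbf{SamplePoint} is independent of the choice of $H$. Then I look at the monomial in $P_{G,T}(\tilde{z})$ corresponding to $H^T$, whose coefficient $e_{H^T}$ is trivially at least $1$ (contributed by $H$ itself). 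The goal reduces to controlling $e_{H^T}$ from above and choosing $k_R$ so that $1 \leq e_{H^T} < 2^{k_R}$.

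The key observation is that any $H' \in \mathcal{H}(G)$ whose $T$-truncation is $H^T$ must, for each vertex $u \in S$, select an outgoing arc of $u$ that ends in $T$; that is, an arc of $u$ in $B_H$. Hence
\[
e_{H^T} \;\leq\; \prod_{u \in S} \deg_{B_H}^{+}(u).
\]
By the AM--GM inequality, this product is at most $(\bar d)^{\tau}$, where $\bar d$ is the average outdegree in $B_H$ of vertices in $S$. Corollary \ref{cor: sparse} with $c=20$ gives, with probability at least $1/50$ over the draw of $T$, the bound $\bar d \leq (1+\tfrac{1}{49})(1+\tfrac{1}{20})$.

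It then remains to verify the arithmetic: with $\tau = n/(20\delta)$ and $k_R = n/(200\delta) = \tau/10$, the inequality
\[
\left(\left(1+\tfrac{1}{49}\right)\left(1+\tfrac{1}{20}\right)\right)^{\tau} \;<\; 2^{k_R}
\]
amounts to $\log_2\!\bigl((50/49)(21/20)\bigr) < 1/10$, which is a direct numerical check (the left side is roughly $0.0995$). Combining this with $e_{H^T} \geq 1$ shows $e_{H^T}$ is a positive integer strictly less than $2^{k_R}$, so it is nonzero modulo $2^{k_R}$, and hence the monomial indexed by $H^T$ has a nonzero coefficient in $R$. This certifies $P_{G,T}(\tilde{z}) \neq 0$ as a formal polynomial.

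The main obstacle is conceptual rather than computational: one has to be careful that the bound on $e_{H^T}$ depends on the random set $S = S(T,H)$, yet the lemma bounding the edges of $B_H$ and its corollary only apply because $H$ is fixed \emph{independently} of $T$. Making this independence explicit at the outset, and then noting that the coefficient $e_{H^T}$ is monotone in the outdegrees appearing in $B_H$, is the step that requires the most care; everything else is inequality bookkeeping and the AM--GM observation. Finally, one should note that although this argument only asserts existence of one nonzero monomial, that suffices for $P_{G,T}(\tilde{z})$ to be a nonzero formal polynomial, which is exactly the claim of the lemma.
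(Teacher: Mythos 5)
Your proposal is correct and follows essentially the same route as the paper: fix $H$ independently of $T$, bound $e_{H^T}$ by the product of outdegrees of $S$ in $B_H$, apply Corollary~\ref{cor: sparse} together with AM--GM, and verify that $k_R=\tau/10$ suffices since $\log_2\bigl((50/49)(21/20)\bigr)<1/10$. The only addition you make explicit --- that $e_{H^T}\geq 1$ and hence the coefficient is nonzero modulo $2^{k_R}$ --- is implicit in the paper's argument, so there is nothing substantive to distinguish the two.
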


We next turn to arguing that $P_G(z)\neq 0$ (over the ring $R$) with high enough probability for the assignment $z$ returned by \textbf{SamplePoint}.
The next famous lemma by Mulmuley, Vazirani, and Vazirani~\cite{MulmuleyVV1987} shows that with $\ell$ large enough, we will be able to isolate a $T$-truncated Hamiltonian path in $\mathcal{H}_T(G)$ that is represented by a monomial in $P_{G,T}(\tilde{z})$.

\begin{Lem}[Isolation Lemma, Mulmuley, Vazirani, and Vazirani~\cite{MulmuleyVV1987}]
Let $m<M$ be two positive integers and let $\mathcal F$ be a nonempty family of subsets of $\{1,\cdots,m\}$. Suppose each element $x\in \{1,\cdots,m\}$
receives a weight $w(x)\in \{1,\cdots, M\}$ independently and uniformly at random. Define the weight of a set $S$ in $\mathcal F$ as $w(S)=\sum_{x\in S} w(x)$. Then, with
probability at least $1-\frac{m}{M}$, there is a unique set in $\mathcal F$ of minimum weight.
\end{Lem}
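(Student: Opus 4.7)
The plan is to follow the classical union bound argument over elements. For each $x \in \{1,\ldots,m\}$, I would call $x$ \emph{ambiguous} if there exist two minimum-weight sets $S_1, S_2 \in \mathcal{F}$ with $x \in S_1$ and $x \notin S_2$. The key observation is that if the minimum-weight set in $\mathcal{F}$ is not unique, then at least one element must be ambiguous: given two distinct minimum-weight sets, any element in their symmetric difference witnesses ambiguity.

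The heart of the proof is to show that for each fixed $x$, the probability that $x$ is ambiguous is at most $1/M$. I would do this by the \emph{deferred decision} (or principle of ``fix everything else first''): expose all weights $w(y)$ for $y \neq x$ before exposing $w(x)$. Given this conditioning, define
\[
\alpha = \min \{ w(S\setminus\{x\}) : x \in S,\ S \in \mathcal{F}\}, \qquad \beta = \min\{ w(S) : x \notin S,\ S \in \mathcal{F}\},
\]
with the convention that an empty minimum is $+\infty$. Then a set containing $x$ has weight $w(x) + \alpha$ at best, and a set not containing $x$ has weight $\beta$ at best. For $x$ to be ambiguous we need both types of sets to be simultaneously minimum, i.e.\ $w(x) = \beta - \alpha$. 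Since $\alpha$ and $\beta$ depend only on $\{w(y): y\neq x\}$, this is a single prescribed value, and $w(x)$ is independently and uniformly distributed over $M$ integers, so the conditional probability is at most $1/M$. Marginalising gives the same bound unconditionally.

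A union bound over the $m$ elements then yields that the probability some element is ambiguous is at most $m/M$, and hence the probability of a unique minimum-weight set is at least $1-m/M$.

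The main subtlety — and the only step I would need to be careful with — is justifying the ``ambiguous element'' characterisation: concretely, verifying that ambiguity of $x$ is precisely the event $w(x) = \beta - \alpha$ (and not a strict inequality), and handling the degenerate case where no set contains $x$, or no set avoids $x$, so that $\alpha$ or $\beta$ is infinite and the equation has no solution, making $x$ trivially non-ambiguous. Once these edge cases are disposed of, the rest is a clean application of the principle of deferred decisions followed by a union bound.
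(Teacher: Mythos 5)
Your proposal is correct: the ``ambiguous element'' characterisation, the deferred-decision argument showing $\Pr[w(x)=\beta-\alpha]\leq 1/M$ for each fixed $x$, and the final union bound constitute precisely the standard proof of the Isolation Lemma due to Mulmuley, Vazirani, and Vazirani, including the correct treatment of the degenerate cases where $\alpha$ or $\beta$ is infinite. Note that the paper itself offers no proof of this statement --- it is imported verbatim as a known result with a citation --- so your argument supplies the classical proof that the paper delegates to the reference.
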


We apply the above Lemma, on the family $\mathcal F$ equal to $\mathcal{H}_T(G)$ \emph{further restricted} to those $T$-truncated Hamiltonian paths that are represented by a monomial in $P_{G,T}(\tilde{z})$, i.e., those $H'$ that have $2^{k_R} \nmid e_{H'}$. We use the weights $w(uv)$ set as in step $3$  of \textbf{SamplePoint} above to obtain $z$, with $\ell=100|A|$. We have with probability at least $1-1/100$ that a monomial exists with some unique weight $\mu$. In particular, with high enough probability, there is a Hamiltonian $s$-$t$ path whose $T$-truncation $H'$ is in $\mathcal{F}$ that will be isolated and contribute the value $e_{H'}x^\mu$ to $P_G(z)$. 

By setting the ring parameter $m_R>n\ell$ in Section~\ref{sec: ring}, we observe that this monomial in the polynomial ring is possible to detect.

\begin{Lem}
\label{lem: bound}
With $\tau=\frac{n}{20\delta}$,$k_R=\frac{n}{200\delta}$, $m_R>n\ell$, and $\ell=100|A|$ (i.e., all the parameters set as above), \emph{\textbf{SamplePoint}} returns $z$ so that
the polynomial
\[
P_{G}(z)\neq 0,
\]
when the input graph has a Hamiltonian cycle, with probability at least $1/100$.
\end{Lem}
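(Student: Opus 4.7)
The plan is to combine Lemma \ref{lem: bound1} with the Isolation Lemma, using that \textbf{SamplePoint} chooses $T$ in Step 1 independently of the weights $w$ chosen in Step 3. I proceed in three steps.

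First, I would condition on the event that $P_{G,T}(\tilde z)\ne 0$, which by Lemma \ref{lem: bound1} occurs with probability at least $1/50$ over the random choice of $T$ in Step 1. Under this event, the family
\[
\mathcal F=\{H'\in \mathcal H_T(G):\ 2^{k_R}\nmid e_{H'}\}
\]
is nonempty, since these are precisely the $T$-truncated paths whose monomials survive the reduction modulo $2^{k_R}$ in the coefficient ring of $R$.

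Next, I would invoke the Isolation Lemma on $\mathcal F$, viewed as a family of subsets of $A$, using the weights $w:A\to\{1,\dots,\ell\}$ from Step 3 of \textbf{SamplePoint}, with $\ell=100|A|$. This gives probability at least $1-|A|/\ell=1-1/100$ that some $H^*\in\mathcal F$ has a strictly unique minimum weight $\mu=\sum_{uv\in H^*} w(uv)$. Since $w$ is sampled independently of $T$, I can multiply the two probabilities, obtaining that both events hold simultaneously with probability at least $(1/50)(99/100)>1/100$.

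It remains to show that on this joint event, $P_G(z)\ne 0$ in $R$. Writing $P_G(z)=\sum_{H'\in\mathcal H_T(G)} e_{H'}\, x^{w(H')}$ after the substitutions of \textbf{SamplePoint}, contributions with $H'\notin\mathcal F$ vanish modulo $2^{k_R}$, while contributions with $H'\in\mathcal F\setminus\{H^*\}$ have strictly larger degree than $\mu$. The coefficient of $x^\mu$ in $P_G(z)$ is therefore exactly $e_{H^*}\bmod 2^{k_R}$, which is nonzero by the definition of $\mathcal F$. The only remaining worry, and in my view the main subtle point, is that this monomial could in principle be erased by the degree truncation modulo $x^{m_R}$; this is ruled out by $\mu\le \sum_{uv\in H^*} w(uv)\le n\ell<m_R$. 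Hence $P_G(z)\ne 0$, completing the proof.
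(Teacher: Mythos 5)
Your proposal is correct and follows essentially the same route as the paper: invoke Lemma~\ref{lem: bound1} for the choice of $T$, apply the Isolation Lemma to the family of $T$-truncations $H'$ with $2^{k_R}\nmid e_{H'}$ using the weights from Step 3 with $\ell=100|A|$, and note that $m_R>n\ell$ prevents the isolated monomial $e_{H^*}x^{\mu}$ from being truncated, giving success probability at least $1/100$. Your spelling-out of why the coefficient of $x^{\mu}$ is exactly $e_{H^*}\bmod 2^{k_R}$ (non-$\mathcal F$ terms vanish modulo $2^{k_R}$, other $\mathcal F$ terms have strictly larger weight) is just a more explicit version of the paper's argument.
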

The probability bound comes from the probability of $T$ being a good choice in Lemma~\ref{lem: bound1} to guarantee that there exists a $H'$ with $e_{H'}<2^{k_R}$ (1/50) after subtracting the probability that the Isolation Lemma was not successful in its isolation (1/100).


\subsection{The Laplacian}
\label{sec: lap}
Bj\"orklund, Kaski, and Koutis~\cite{BjorklundKK2017} observed that the number of Hamiltonian cycles in a directed graph can be evaluated as an inclusion--exclusion summation over a determinant of a polynomial matrix representing the graph. We will use their construction, not over the integers, but over the particular ring $R$ defined in Section~\ref{sec: ring}. This means we will lose the ability to count the Hamiltonian cycles, but it will also enable a faster evaluation of the inclusion--exclusion formula as we will demonstrate. We reiterate their construction here for the sake of completeness and easy reference.

The weighted Laplacian of the graph $G$, is a $(n+1)\times(n+1)$ polynomial matrix $L=L_G(y,z)$ with rows and columns indexed by the vertices $V$, in the variables $y_v$ for $v\in V\setminus\{t\}$, and variables $z_{uv}$ for $uv\in A$:
\begin{equation}
\label{eq: lap}
L_{i,j}=\left\{\begin{array}{ll} 
\sum_{wv\in A} z_{wv}y_w & \text{if } i=j=v \\ 
-z_{uv}y_u & \mbox{if } i=u,j=v,uv\in A\\ 0 & \mbox{otherwise}.\end{array}\right.
\end{equation}
The Laplacian \emph{punctured at the start vertex $s$}, is the matrix $L_{s}$ obtained by removing row and column $s$ from $L$. 
In~\cite{BjorklundKK2017}(their Theorem 5), it was observed that Tutte's directed version of the Matrix-Tree theorem of Kirchhoff~\cite{Tutte1948}, where $\operatorname{det}(L_s)$ is a polynomial in which each term corresponds to a directed spanning out-branching rooted at $s$, could be used to compute the Hamiltonicity polynomial. By the principle of inclusion--exclusion, letting $|y|$ denote the number of vertices $v$ for which $y_v=1$, we have

\begin{Lem}[Paraphrasing Equation (7) in Bj\"orklund, Kaski, and Koutis~\cite{BjorklundKK2017}]
\label{lem: hp}
\begin{equation}
\label{eq: hp}
P_G(z)=\sum_{y:(V\setminus\{t\})\rightarrow\{0,1\}}(-1)^{n-|y|}\operatorname{det}\left(L_s(y,z)\right).
\end{equation}
\end{Lem}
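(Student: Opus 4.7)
The plan is to prove the identity in two stages: first interpret $\det(L_s(y,z))$ combinatorially using Tutte's directed Matrix-Tree theorem, and then apply a standard inclusion--exclusion argument on $y$ to isolate precisely those out-arborescences that are Hamiltonian $s$-$t$ paths.

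For the first stage, I would invoke the directed Matrix-Tree theorem on the weighted Laplacian defined in (\ref{eq: lap}), treating the weight of arc $uv$ as $z_{uv}y_u$. The theorem yields
\[
\det\bigl(L_s(y,z)\bigr) \;=\; \sum_{F} \prod_{uv\in F} z_{uv}\, y_u,
\]
where the sum is over spanning out-arborescences $F$ of $G$ rooted at $s$. A small sanity check is needed for the vertex $t$: by construction $t$ has no outgoing arcs, so $y_t$ (which is not among our variables) never appears, and $t$ is automatically a sink in every $F$. Now restrict attention to assignments $y:V\setminus\{t\}\to\{0,1\}$, and let $Y=\{v:y_v=1\}$. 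The monomial $\prod_{uv\in F} y_u$ evaluates to $1$ if every tail of an arc of $F$ lies in $Y$, and to $0$ otherwise, so
\[
\det\bigl(L_s(y,z)\bigr) \;=\; \sum_{\substack{F\text{ out-arb.\ rooted at }s\\ \text{tails}(F)\subseteq Y}} \prod_{uv\in F} z_{uv}.
\]

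For the second stage, substitute this into the right-hand side of (\ref{eq: hp}), swap the order of summation, and evaluate the inner sum over $y$ for each fixed $F$. With $T_F$ denoting the set of tails of $F$ and $U=(V\setminus\{t\})\setminus T_F$, a routine binomial computation gives
\[
\sum_{Y:\,T_F\subseteq Y\subseteq V\setminus\{t\}} (-1)^{n-|Y|} \;=\; (1-1)^{|U|},
\]
which is $0$ unless $U=\emptyset$, i.e.\ unless $T_F=V\setminus\{t\}$. Because $F$ has exactly $n$ arcs whose tails lie in $V\setminus\{t\}$, the condition $T_F=V\setminus\{t\}$ forces every vertex of $V\setminus\{t\}$ to have out-degree exactly $1$ in $F$, and combined with the out-arborescence property this is precisely the description of a Hamiltonian $s$-$t$ path. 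Hence only Hamiltonian $s$-$t$ paths survive and contribute $\prod_{uv\in H} z_{uv}$ each, reproducing $P_G(z)$.

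The only real subtlety I expect is bookkeeping around the conventions of the directed Matrix-Tree theorem: one must confirm that \emph{out}-arborescences rooted at $s$ (rather than in-arborescences) are what $\det(L_s)$ enumerates under the sign and indexing convention of (\ref{eq: lap}), where $L_{uv}=-z_{uv}y_u$ places the variable on the tail $u$ and the row/column deletion is at $s$. Once this orientation is pinned down, both the Matrix-Tree step and the inclusion--exclusion step are standard, and the proof is essentially immediate; the authors themselves cite this as a paraphrase of an equation already established in~\cite{BjorklundKK2017}.
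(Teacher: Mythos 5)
Your proposal is correct and follows essentially the same route as the paper, which establishes this identity (via the cited Equation (7)/Theorem 5 of Bj\"orklund, Kaski, and Koutis) by exactly the two ingredients you use: Tutte's directed Matrix-Tree theorem interpreting $\operatorname{det}(L_s(y,z))$ as a sum over spanning out-branchings rooted at $s$ with arc weight $z_{uv}y_u$, followed by inclusion--exclusion over $y$ to kill every branching whose tails miss some vertex of $V\setminus\{t\}$, leaving precisely the Hamiltonian $s$-$t$ paths. Your orientation bookkeeping is the right convention, and the only cosmetic slip (dropping the harmless sign factor $(-1)^{n-|T_F|}$, which equals $1$ in the only surviving case) does not affect the argument.
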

The summation is over all $2^{n-1}$ assignments $y:V\setminus\{t\}\rightarrow \{0,1\}$. 
Hence, with the formula in Lemma~(\ref{lem: hp}), we now have a way to evaluate $P_G(z)$ in a particular point $z$ in $2^n\operatorname{poly}(n)$ time.
We will next  see how we can speed-up the evaluation for a $z$ from our sample space.

\subsection{Random perturbations at $T$}
\label{sec: perturb}
Following~\cite{BjorklundKK2017} and~\cite{BjorklundW2019},
 we perturb the Laplacian matrices, without affecting the determinant, so that in expectation many summands in the above formula Eq.~\ref{eq: hp} are zeroed-out. We introduce new random variables $q_v\in \{0,1\}$ for $v\in T$, sampled uniformly and independently, where $T$ is the sampled set from \textbf{SamplePoint} and define the \emph{$q$-perturbed Laplacian} of $G$ as
\begin{equation}
L^q_{i,j}=
\left\{\begin{array}{ll} 
\sum_{wv\in A} z_{wv}y_w & \text{if } i=j=v,v\not \in T \\
\sum_{wv\in A} z_{wv}y_w-q_v & \text{if } i=j=v,v\in T \\
-z_{uv}y_u & \text{if } i=u,j=v, uv\in A\\
0 & \mbox{otherwise}.\end{array}\right.
\end{equation}
Comparing this to Eq.~\ref{eq: lap}, we have only added a term on some of the diagonal entries in the rows indexed by our sampled set $T$.
Note that these extra $q_v$ variables do not affect the final inclusion--exclusion sum, as only the monomials representing Hamiltonian paths from $s$ to $t$ are counted, in particular only monomials with all $n$ $y_u$-variables for $u\in V\setminus\{t\}$, confer~\cite{BjorklundKK2017} for a proof. Hence, irrespective of $q$, we can still compute the Hamiltonicity polynomial as:
\begin{equation}
\label{eq: ham}
P_G(z)=\sum_{y:(V\setminus\{t\})\rightarrow\{0,1\}} (-1)^{n-|y|}\operatorname{det}\left(L^q_s(y,z)\right).
\end{equation}
What we have gained by doing this, is that the probability that a row indexed by $i\in T$ has its diagonal entry divisible by two, is $1/2$, independently of other rows. We will next see how we can use this.

\subsection{Efficient Evaluation of $P_G(z)$ given $T$}
\label{sec: speedup}

The basic idea is the same underlying the speed-ups in~\cite{BjorklundH2013,BjorklundKK2017,BjorklundW2019}. We make sure that in expectation, many summands in Eq.~(\ref{eq: ham}) will be trivially zero. Then, to evaluate the formula it suffices to list only the summands that are not trivially zero, so-called \emph{contributing} terms, and sum up their contributions. Here, with ``trivially zero'', we will mean matrices that has at least $k=k_R$ rows of the matrix among the rows indexed by a vertex in the sampled set $T$ from \textbf{SamplePoint} with all coefficients even. 
To see that such a term is zero, we merely have to recall Leibniz's determinant expansion of a matrix $M=\{m_{i,j}\}$:
\[
\operatorname{det}(M)=\sum_{\sigma\in S_n} \operatorname{sgn}(\sigma)\prod_{i=1}^n m_{i,\sigma(i)},
\]
where $S_n$ is the set of all permutations on $n$ elements. Note in particular that in every term there is one element from each row. Hence, if the matrix has $k$ rows in which every monomial $ax^b$ in a ring element has $a$ even, the product (over $\mathbb{Z}$) must be divisible by $2^k$ and hence cancel in the ring $R$. 

Our algorithm to compute $P_G(z)$ will list the terms in Eq.~\ref{eq: ham} that have at least one odd coefficient in some ring element in at least $\tau-k+1$ of the rows of $L_s$ representing vertices in $T$. This is what is required to be a contributing terms.
The algorithm outline is postponed to the next section.

We begin by arguing that, in expectation over the random $q$ values, there are not too many contributing terms.
Recalling Eq.~\ref{eq: lap}, and inspecting any such row in the matrix $L_s(y)$ for a vertex $v\in T$, we see that 

\begin{enumerate}
\item Off-diagonal entries are zero if $y_v=0$ or $v=t$, 
\item The diagonal entry is divisible by two if 
\[
\sum_{wv\in A} y_w=q_v (\mbox{ mod }2),
\]
remembering that $z_{wv}=1$ for all $wv\in A$ with $v\in T$.
\end{enumerate}
\vspace{5mm}

Fix an assignment $y:V\setminus\{t\}\rightarrow \{0,1\}$, and let $Z_y\subseteq T$ be the vertices $u$ for which the assignment sets $y_u=0$, along with $t$ if $t\in T$.
From the above, the probability over the random $q$ values, of the event $\varepsilon_y$ that a fixed assignment $y$ does \emph{not} result in a trivially zero term in Eq.~\ref{eq: ham}, is 
\begin{equation}
\label{eq: prob}
\Pr_q[\varepsilon_y]= \left(\frac{1}{2}\right)^{|Z_y|}\sum_{i=0}^{k-1} \binom{|Z_y|}{i}.
\end{equation}
Here we use that the diagonal entry of a row indexed by a vertex in $Z_y$ is even with probability $1/2$ independently of other rows as argued in Section~\ref{sec: perturb}.
Let $Y$ be the random variable equal to the number of assignments that are contributing. Then, in expectation
\begin{equation}
\label{eq: exp}
\mathbb{E}[Y]=\sum_{y\in V\setminus\{t\}} \Pr_q[\varepsilon_y].
\end{equation}
We can bound the expectation as
\begin{Lem}
\label{lem: runtime}
\[
\mathbb{E}(Y)\in 2^{n-\Omega\left(\frac{n}{\delta}\right)}.
\]
\end{Lem}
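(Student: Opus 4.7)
The plan is to reduce $\mathbb{E}[Y]$ to a lower-tail probability of a single binomial random variable and then apply a Chernoff bound. First, I would substitute Eq.~(\ref{eq: prob}) into Eq.~(\ref{eq: exp}) and group the assignments $y:V\setminus\{t\}\to\{0,1\}$ by $j=|Z_y|$. Since $Z_y\subseteq T$ depends only on the restriction of $y$ to $T\cap(V\setminus\{t\})$, the other $n-\tau$ coordinates of $y$ are free and contribute a flat factor of $2^{n-\tau}$ (a fixed factor of $2$ is absorbed if $t\in T$ shifts $|Z_y|$ by one, which is harmless for the asymptotics). The number of assignments on the $T$-coordinates with $|Z_y|=j$ is $\binom{\tau}{j}$, so
\begin{equation*}
\mathbb{E}[Y]\;\le\; O(1)\cdot 2^{n-\tau}\sum_{j=0}^{\tau}\binom{\tau}{j}\,2^{-j}\sum_{i=0}^{k-1}\binom{j}{i},
\end{equation*}
where $k=k_R$.

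The crucial observation is a two-stage coin-flip identity. If $J\sim\mathrm{Bin}(\tau,1/2)$ and then $X\mid J\sim\mathrm{Bin}(J,1/2)$, one may equivalently flip each of $\tau$ coins twice and set $X$ equal to the number coming up heads on both flips; hence $X\sim\mathrm{Bin}(\tau,1/4)$. Reading the inner sum $\sum_{i=0}^{k-1}\binom{j}{i}$ as $2^j\Pr[\mathrm{Bin}(j,1/2)<k]$, the displayed expression above becomes
\begin{equation*}
\mathbb{E}[Y]\;\le\; O(1)\cdot 2^n\cdot \Pr\!\left[X<k\right],\qquad X\sim\mathrm{Bin}(\tau,1/4).
\end{equation*}

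Now plug in $\tau=\tfrac{n}{20\delta}$ and $k=k_R=\tfrac{\tau}{10}$ from Lemma~\ref{lem: bound1}. The mean of $X$ is $\tau/4$, and $k=\tau/10$ sits below this mean by a constant multiplicative factor, so a standard Chernoff lower-tail bound gives $\Pr[X<\tau/10]\le 2^{-c\tau}$ for some absolute constant $c>0$. Combining,
\begin{equation*}
\mathbb{E}[Y]\;\le\; O(1)\cdot 2^{n}\cdot 2^{-c\tau}\;=\;2^{\,n-\Omega(n/\delta)},
\end{equation*}
which is the desired bound.

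The only non-routine step is the two-stage coin-flip identity, which collapses what looks like a messy double binomial sum into the tail probability of a single $\mathrm{Bin}(\tau,1/4)$. Once that is in hand, the remaining steps are standard: the combinatorial regrouping is mechanical, the Chernoff bound applies because $k/\tau=1/10$ is a constant strictly below $1/4$, and the parity/off-by-one issues stemming from whether $t\in T$ or whether one uses $V$ versus $V\setminus\{t\}$ only introduce bounded constants that are absorbed into the $\Omega(\cdot)$.
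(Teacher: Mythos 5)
Your proof is correct, but it takes a different route from the paper. The paper bounds the same double sum by a case split on $|Z_y|$: assignments with $|Z_y|<\tau/3$ are simply counted (a binomial-coefficient tail $\sum_{i<\tau/3}\binom{\tau}{i}$ times $2^{n-\tau}$), while for $|Z_y|\geq\tau/3$ it bounds $2^{-\gamma}\sum_{i<k}\binom{\gamma}{i}$ by its value at $\gamma=\tau/3$ (using that this quantity is maximised at the smallest admissible $\gamma$) and multiplies by $2^n$; both pieces are then $2^{n-\Omega(\tau)}$ by standard binomial-tail estimates. You instead collapse the entire sum exactly, via the two-stage coin-flip (binomial thinning) identity, into $O(1)\cdot 2^n\Pr[X<k]$ with $X\sim\mathrm{Bin}(\tau,1/4)$ — conceptually, a vertex of $T$ yields an all-even row precisely when $y_u=0$ (probability $1/2$ over uniform $y$) and the $q$-perturbed diagonal is even (probability $1/2$ over $q$) — and then a single Chernoff lower-tail bound with $k/\tau=1/10<1/4$ finishes the argument. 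Your version avoids both the case split and the monotonicity observation, at the cost of invoking Chernoff rather than elementary binomial-coefficient bounds; your handling of the $t\in T$ off-by-one and the free $V\setminus T$ coordinates as an absorbed constant factor is also fine, so the asymptotic conclusion $\mathbb{E}[Y]\in 2^{n-\Omega(n/\delta)}$ follows as in the paper.
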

\begin{proof}
From Eq.~\ref{eq: prob} and Eq.~\ref{eq: exp} we have
\begin{equation}
\label{eq: split}
\mathbb{E}[Y] \leq \sum_{\substack{y\in V\setminus\{t\}\\ |Z_y|<\frac{\tau}{3}}} 1 +\sum_{\substack{y\in V\setminus\{t\}\\ |Z_y|\geq \frac{\tau}{3}}}  \left(\frac{1}{2}\right)^{|Z_y|}\sum_{i=0}^{k-1} \binom{|Z_y|}{i}.
\end{equation}

The left term in Eq.~\ref{eq: split} is
\[
2^{n-\tau}\left(\sum_{i=0}^{\tau/3-1} \binom{\tau}{i}\right) \in 2^{n-\Omega(\tau)},
\]
and the right term in Eq.~\ref{eq: split} is less than
\[
2^{n}\left(\frac{1}{2^{\tau/3}}\sum_{i=0}^{k-1} \binom{\tau/3}{i}\right) \in 2^{n-\Omega(\tau)},
\]
after remembering $k<\tau/10$ and $\tau=\frac{n}{20\delta}$, and noting that 
\[
\frac{1}{2^{\gamma}}\sum_{i=0}^{k-1} \binom{\gamma}{i},
\]
for $\gamma \in\{\tau/3,\cdots \tau\}$ is maximised for $\gamma=\tau/3$.
The stated bound in the Lemma follows.
\end{proof}
\subsection{Listing Contributing Terms}
\label{sec: list}
We finally describe how to list the contributing term assignments $y:V\setminus\{t\}\rightarrow \{0,1\}$ needed to compute $P_G(z)$ via Eq.~\ref{eq: ham} for a fixed $q$.
The idea is to test for each partial assignment $y^*:T\setminus \{t\}\rightarrow\{0,1\}$ with the interpretation that $y_v=y^*_v$ for $v\in T\setminus \{t\}$, and each way of assigning parities $p:Z_{y^*}\rightarrow \{0,1\}$ to the diagonal entries of the vertices in $Z_y$ that is consistent with a not trivially zero assignment, i.e., $p$ takes the value $0$ on at most $k-1$ rows. We then notice that these diagonal entries in $Z_y$ describe a linear equation system over the variables in $y$ outside of $T$. The equation system $E(y^*,p)$ consists of the equations (modulo two)
\[
\sum_{wv\in A} y_{w}+q_v=p_v,
\]
for each $v\in Z_y$, where we replace each variable $y_{w}$ with $w\in T$ for its value $y^*_{w}$. We can list all solutions to this equation system by Gaussian elimination. We first solve for one solution and a null space basis. We next can enumerate all solutions by taking all linear combinations of the null space basis vectors with the solution. In summary our streaming procedure that generates all contributing terms' assignments is (we will think of it as a background process generating the solutions one-by-one):
\vspace{5mm}

\noindent \textbf{ListingTerms}(outputs contributing assignments $y:V\setminus\{t\}\rightarrow \{0,1\}$ needed for Eq.~\ref{eq: ham})
\begin{enumerate}
\item For each $y^*:T\setminus\{t\} \rightarrow \{0,1\}$,
\item \hspace{5mm} For each $p:Z_{y^*}\rightarrow \{0,1\}$ with $|p|>|Z_{y^*}|-k$,
\item \hspace{10mm} Report every solution $y$ to $E(y^*,p)$.
\end{enumerate}

Note that this lists every contributing term's assignment once since each $y$ has precisely one restriction $y^*$ on $T$ and matches one of the tested $p$'s. To bound the running time, we know from Lemma~\ref{lem: runtime} that the output number of $y$ assignments are at most $2^{n-\Omega(\frac{n}{\delta})}$ in expectation. This dominates the running time, since the number of equation systems considered, each of which can be solved in polynomial time in the sense of providing a parameterisation of the solution space as one solution vector along with the null space, is at most $3^\tau\in 2^{O(n/\delta)}$.
\subsection{High-Level Algorithm}
\label{sec: sum}
Putting the parts of the previous sections together, we are ready to give the high-level description of our algorithm in Theorem~\ref{thm:main} as
\vspace{5mm}

\noindent \textbf{DecideHamiltonicity}(answers whether input $G_{\mbox{in}}$ has a Hamiltonian cycle)
\begin{enumerate}
\item Repeat for $100\log n$ times:
\item \hspace{5mm} Call \textbf{SamplePoint} to obtain point $z$ and subset $T$.
\item \hspace{5mm} Pick a $q$ uniformly at random.
\item \hspace{5mm} While there are still contributing terms:
\item \hspace{10mm} Get next $y$ from background process \textbf{ListingTerms}.
\item \hspace{10mm} If the number of generated terms is too big, continue to next outer repetition.
\item \hspace{10mm} Add $y$'s contribution to $P_G(z)$.
\item \hspace{5mm} If $P_G(z)\neq 0$ break and output ``Yes''.
\item Output ``No''.
\end{enumerate}

In particular, there is no need to store the list of $y$ assignments explicitly, but rather we use them one by one as they are generated to update the sum in Eq.~\ref{eq: ham}. From Lemma~\ref{lem: bound} we know that a false negative happens with probability $1-1/100$.
Since we pick $100\log n$ sample points $z$, independently of each other, we will be unsuccessful in all of them with probability $(1-1/100)^{100\log n}<n^{-1}$. From Section~\ref{sec: list} we know the number of contributing term assignments and the running time of \textbf{ListingTerms} is $2^{n-\Omega(\frac{n}{\delta})}$ in expectation. If the number of generated terms are more than $n$ times the expected value, we abort this $z,T,q$-value repetition at step 6 of the algorithm. This also happens only with probability $n^{-1}$ by Markov's inequality. 
Altogether, the probability of a false negative is at most $\frac{2}{n}$.
This concludes the proof of Theorem~\ref{thm:main}. 

\section*{Acknowledgements}
We are very grateful to an anonymous reviewer who pointed out a serious flaw in an earlier proof attempt of Lemma~\ref{lem: bound}.

\bibliographystyle{abbrv}
\bibliography{paper}

\end{document}